\definecolor{darkred}  {rgb}{0.5,0,0}
\definecolor{darkblue} {rgb}{0,0,0.5}
\definecolor{darkgreen}{rgb}{0,0.5,0}
\newcommand{\be}{\begin{equation}}
\newcommand{\ee}{\end{equation}}
\newcommand{\ba}{\begin{array}}
\newcommand{\ea}{\end{array}}
\newcommand{\bea}{\begin{eqnarray}}
\newcommand{\eea}{\end{eqnarray}}
\newcommand{\ra}{\rangle}
\newcommand{\la}{\langle}
\newcommand{\calL}{{\cal L }}
\newcommand{\calC}{{\cal C }}
\newcommand{\ZZ}{\mathbb{Z}}
\newcommand{\CC}{\mathbb{C}}
\newcommand{\RR}{\mathbb{R}}
\newcommand{\norm}[1]{\left\Vert#1\right\Vert}
\newtheorem{dfn}{Definition}
\newtheorem{lemma}{Lemma}
\newtheorem{fact}{Fact}
\newtheorem{claim}{Claim}
\newtheorem{corol}{Corollary}
\begin{document}

\title{Identity check problem for shallow quantum circuits}

\author{Sergey Bravyi}
\affiliation{IBM Quantum, IBM T.J. Watson Research Center, Yorktown Heights, NY 10598 (USA)}
\author{Natalie Parham}
\affiliation{Columbia University}
\author{Minh Tran}
\affiliation{IBM Quantum, IBM T.J. Watson Research Center, Yorktown Heights, NY 10598 (USA)}

\begin{abstract}
Checking whether two quantum circuits are approximately equivalent is  a common task in quantum computing.
We consider a closely related identity check problem: given a quantum circuit $U$, 
one has to estimate the diamond-norm distance between $U$ and the identity channel.
We present a classical algorithm approximating the distance to the identity 
within a factor $\alpha=D+1$ for shallow geometrically local $D$-dimensional circuits 
provided that the circuit is sufficiently close to the identity.
The runtime of the algorithm
scales linearly  with the number of qubits for any constant circuit depth and spatial dimension.
We also show that the operator-norm distance to the identity $\|U-I\|$ can be efficiently approximated within a
factor $\alpha=5$ for shallow 1D circuits and, under a certain technical condition,
within a factor  $\alpha=2D+3$ for shallow $D$-dimensional circuits. 
A numerical implementation of the  identity check  algorithm is reported for 1D Trotter circuits  with up to 100 qubits.
\end{abstract}

\maketitle

\section{Introduction}
\label{sec:intro}

A quantum circuit implementation of the desired unitary operation is rarely exact.
Common sources of errors include 
hardware noise owing to an imperfect control and decoherence,
errors introduced by the circuit compiling step,
and errors owing to an approximate nature of a quantum algorithm such as 
Trotter errors in simulation of Hamiltonian dynamics.
To validate a solution offered by a quantum
algorithm, is it essential that errors of each type are accounted for and
reasonably tight upper bounds on the
deviation from the ideal solution
are provided. 

Unfortunately, there is little hope that the distance between arbitrary  $n$-qubit quantum operations
can be computed efficiently for $n\gg 1$.
To begin with, an exponentially large Hilbert space dimension prevents one from 
obtaining the full matrix description of quantum operations or performing linear algebra
on such matrices. Furthermore, computational complexity theory provides no-go theorems for an
efficient distance estimation in many cases of interest.
For example, Rosgen and Watrous showed~\cite{rosgen2005hardness,rosgen2007distinguishing} that estimating the distance between two shallow (with depth logarithmic in $n$)
quantum circuits allowing mixed states 
is PSPACE-hard.
This essentially rules out efficient classical or quantum algorithms for the problem.
Likewise, Janzing, Wocjan, and Beth established QMA-hardness of estimating the distance between
two unitary circuits~\cite{janzing2005non}. The latter result was strengthened by Ji and Wu~\cite{ji2009non}
who proved QMA-hardness of estimating the distance between two constant-depth
circuits with the one-dimensional qubit connectivity. This may come as a surprise since
one-dimensional shallow circuits are easy to simulate classically using Matrix Product States~\cite{vidal2004efficient}.

It is important that the no-go results stated above hold only if the distance between quantum circuits
has to be estimated with a small {\em additive error} scaling inverse polynomially with the number of qubits $n$.
Is it possible that  some less stringent approximation of the distance can be computed efficiently?
Here, we show that the answer is YES
and report linear-time classical  algorithms approximating the 
diamond-norm and the
operator-norm distances 
between certain quantum circuits 
with a constant {\em multiplicative  error}.  Such approximation may be good enough for
practical purposes. Note that an estimate of the distance with a constant multiplicative error is informative
regardless of how small the distance is. For example, our algorithm can efficiently approximate the distance
even if the latter is exponentially small in $n$. This would be impossible for an algorithm that achieves
an additive error approximation scaling inverse polynomially with $n$.

Let us formally pose the distance estimation problem and
state our main results. Suppose $U$ is a unitary operator implemented by 
a quantum circuit acting on $n$ qubits. The diamond-norm distance~\cite{aharonov1998quantum} between $U$
and the identity operation  is defined as 
\be
\label{diamond_norm}
\delta(U) = \max_\rho \| (U\otimes I)\rho (U^\dag \otimes I) - \rho\|_1
\ee
where $\|\cdot \|_1$ is the trace  norm,
 $I$ is the $n$-qubit identity, and the maximization is over all $2n$-qubit states $\rho$.
The distance $\delta(U)$ has a simple operational meaning: replacing $U$ by the identity in any experiment
that makes use of one copy of $U$ 
 could change the probability distribution describing classical outcomes of the experiment at most by
$\delta(U)/2$ in the total variation distance~\cite{aharonov1998quantum,ben2009complexity}.
Accordingly $\delta(U)\le 2$  with the equality if $U$ is perfectly distinguishable from the identity
in the single-shot setting.

The  identity check problem is concerned with
estimating the distance $\delta(U)$.
Checking an approximate equivalence of $n$-qubit quantum circuits $U_1$ and $U_2$ is a special
case of this problem since the diamond-norm distance between 
$U_1$ and $U_2$  coincides with $\delta(U_2^\dag U_1)$. 
An identity check  algorithm   is said to achieve an approximation ratio $\alpha\ge 1$
for a class of quantum circuits $\calC$ if it takes as input a circuit $U\in \calC$
and outputs a real number $\gamma$ such that 
\be
\label{alpha}
\delta(U) \le \gamma \le \alpha \delta(U)
\ee
for all circuits $U\in \calC$.
The algorithm is  efficient if its runtime scales at most polynomially with the number of qubits $n$
for a fixed approximation ratio $\alpha$.

Our main result is a classical identity check algorithm for shallow  geometrically local circuits.
We assume that $n$ qubits are located at cells of a $D$-dimensional
rectangular array and consider circuits
composed of single-qubit
and two-qubit gates acting on nearest-neighbors cells (cells $i$ and $j$ are called nearest-neighbors if one can go from $i$ to $j$ by 
changing a single coordinate by $\pm1$).
A depth-$h$ circuit consists of $h$ layers of gates such that 
within each layer all gates are disjoint.
Our identity check algorithm for $D$-dimensional circuits
achieves an approximation ratio 
\be
\label{alpha_diamond_norm}
\alpha=D+1
\ee
if the input circuit satisfies $\delta(U)<2$
and
$\alpha=1.16(D+1)$ in the general case.
The runtime of the algorithm is
\be
\label{runtime_D}
T\sim n 2^{12(2hD)^D}.
\ee
The runtime is linear in $n$ for any constant circuit depth $h$
and spatial dimension $D$. 
We note that achieving an approximation ratio $\alpha=1+\epsilon$ with $\epsilon=poly(1/n)$ is at least as hard as approximating the distance
$\delta(U)$ with an additive error $poly(1/n)$. The latter problem is known to be QMA-hard even in the case
of constant-depth 1D circuits~\cite{ji2009non} which rules out efficient algorithms.
An interesting open problem is whether an efficient classical or quantum algorithm can obtain
an approximation $\alpha=1+\epsilon$ for any constant $\epsilon>0$. If true, this would
provide a Polynomial Time Approximation Scheme~\cite{vazirani2001approximation} for the identity check problem.

 Applications such as Quantum Phase Estimation~\cite{nielsen2010quantum}
or Krylov subspace algorithms~\cite{huggins2020non,seki2021quantum,kirby2023exact}
are sensitive to the overall phase of a quantum circuit since the circuit may be controlled by ancillary qubits.
This motivates
a phase-sensitive version of the identity check problem where the goal is
to estimate the operator-norm distance $\|U-I\|$, that is, the largest singular value of
$U-I$.  As before, we aim at approximating $\|U-I\|$ with a constant multiplicative error.

A natural strategy
is to reduce the task
of approximating $\|U-I\|$ to the one of approximating the diamond-norm distance $\delta(U)$, which has been already
addressed.
It is clear however that such a reduction may not always be possible. For example, if $U=e^{i\varphi} I$ is a multiple
of the identity then $\delta(U)=0$ while $\|U-I\|$ may take any value between $0$ and $2$.
To overcome this obstacle, our algorithm requires
an additional input data
which depends on the phase of $U$. Namely, let  $P_U$ be the smallest convex subset of 
the complex plane $\CC^2$ that contains all eigenvalues of $U$. 
Equivalently, $P_U$ is a polygon whose
vertices are eigenvalues of $U$. Since $U$ is unitary, all vertices of $P_U$ lie on the unit circle. 
It is known~\cite{aharonov1998quantum} that the polygon $P_U$ provides a simple geometric interpretation of the diamond-norm distance, see Fig.~\ref{fig:1}.

\begin{figure}[h]
         \includegraphics[width=0.25\textwidth]{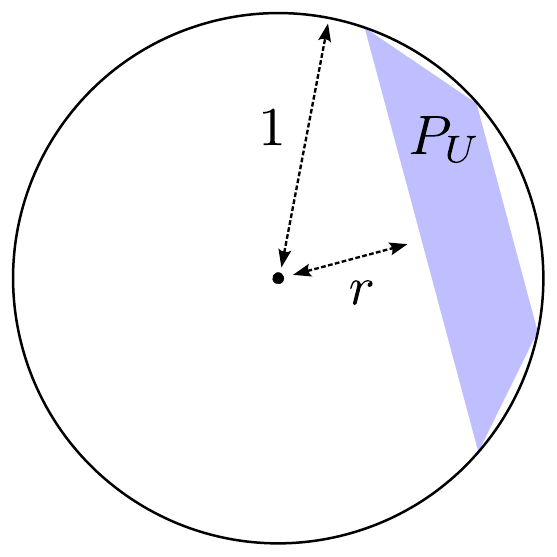}
     \caption{Eigenvalue polygon $P_U$ whose vertices are eigenvalues of $U$. The diamond-norm distance between $U$ and the identity channel is $\delta(U)=2\sqrt{1-r^2}$, where $r$ is the distance between $P_U$ and the origin~\cite{aharonov1998quantum}. If $P_U$ does not contain the origin then $\delta(U)$ coincides with the diameter of $P_U$. Otherwise, $\delta(U)=2$.}  
     \label{fig:1}
 \end{figure}

Our approximation algorithm for the phase-sensitive identity check problem takes as input a $D$-dimensional depth-$h$
circuit acting on $n$ qubits and an arbitrary point $t\in P_U$.
The algorithm outputs an estimator 
\be
\label{gamma_op}
\gamma_{op} = \gamma + |t-1|,
\ee
where $\gamma$ is an estimate of the diamond-norm distance $\delta(U)$ 
satisfying $\delta(U)\le \gamma \le \alpha \delta(U)$
obtained by calling the identity check algorithm for the diamond-norm $\delta(U)$.
We show that 
\be
\label{alpha_op}
\|U-I\| \le \gamma_{op} \le \alpha_{op}  \|U-I\|.
\ee
with
\be
\label{alpha_op'}
\alpha_{op} = 1+2\alpha.
\ee
Suppose $\rho$ is some $n$-qubit state such that the trace $\mathrm{Tr}(\rho U)$ can be computed efficiently. Note that $\mathrm{Tr}(\rho U)\in P_U$ since the diagonal of $\rho$ in the eigenbasis of $U$ is a probability distribution. Thus one can use the estimator Eq.~(\ref{gamma_op}) with $t=\mathrm{Tr}(\rho U)$. For example, if $U$ is a 1D shallow circuit, one can choose $\rho$ as an arbitrary product state. Since $U$ is a Matrix Product Operator with a bond dimension $2^{O(h)}$ one can compute $\mathrm{Tr}(\rho U)$ efficiently using algorithms based on 
Matrix Product States~\cite{schollwock2011density} as long as $h=O(\log{n})$. 
In the 1D case Eqs.~(\ref{alpha_diamond_norm},\ref{alpha_op'}) give $\alpha=2$ and
$\alpha_{op}=5$ while the runtime of the algorithm is $T\sim n2^{O(h)}$, see Eq.~(\ref{runtime_D}).
As another example,
suppose $U$ is a Trotter circuit describing time evolution of a $D$-dimensional Hamiltonian 
composed of local Pauli  terms $XX+YY$, $ZZ$, and $Z$ that preserve the Hamming weight.
Then the all-zeros state $|0^n\ra$
is a common eigenvector of each 
individual gate in $U$  and one can choose
$\rho$ as the all-zeros state, that is,
$t=\la 0^n|U|0^n\ra$. 
From Eqs.~(\ref{alpha_diamond_norm},\ref{alpha_op'}) one gets $\alpha_{op}=2D+3$.
In general, the above gives an efficient algorithm approximating $\|U-I\|$ within a factor
$\alpha_{op}=2D+3$ for $D$-dimensional constant-depth circuits provided that one can efficiently find at least
one point in the eigenvalue polygon $P_U$.

The unfavorable runtime scaling of our algorithm with the circuit depth limits its application to very shallow circuits.
However, the algorithm can be extended to deep circuits $U$ using the divide and conquer strategy.
Namely, if $U=U_\ell \cdots U_2 U_1$ where each layer $U_i$
has depth $O(1)$, the triangle inequality gives $\delta(U)\le \sum_{i=1}^\ell \delta(U_i) \le \sum_{i=1}^\ell \gamma_i$, where
$\gamma_i$ is an upper bound on $\delta(U_i)$ computed by our algorithm.
The runtime for computing this upper bound on $\delta(U)$ scales only linearly with the depth of $U$ but we can no longer guarantee that the upper
bound is tight within a constant factor. Other tradeoffs between the runtime and the upper bound tightness are discussed in Section~\ref{sec:algo1}.

Although this work primarily  focuses on computing upper bounds on the distance to the identity, as required for validation of quantum algorithms, efficiently computable lower bounds on the distance are also of
interest. Density Matrix Renormalization Group (DMRG) algorithms~\cite{schollwock2011density}  provide a powerful tool for 
computing lower bounds on the distance $\delta(U)$ or $\|U-I\|$ for 1D shallow circuits $U$. 
Indeed, one can easily check that  the squared distance $\|U-I\|^2$ coincides with the largest eigenvalue
of a Hamiltonian $H=2I-U-U^\dag$. If $U$ is a depth-$h$ 1D circuit then $H$
is a Matrix Product Operator (MPO) with 
a bond dimension $2^{O(h)}$. In practice, extremal eigenvalues of MPO Hamiltonians with a small bond dimension 
can be well approximated using DMRG algorithms~\cite{schollwock2011density}. 
However, since DMRG is a variational algorithm, it only provides a lower bound on the
distance $\|U-I\|$. To lower bound the diamond-norm distance we use a bound
\[
\delta(U)\ge \|U\otimes U^\dag - I\otimes I\|,
\]
with the equality if $\delta(U)<2$,
see Section~\ref{sec:commutators}. Thus $\delta(U)^2$ is lower bounded by the maximum eigenvalue
of an MPO Hamiltonian $H=2I\otimes I - U\otimes U^\dag - U^\dag \otimes U$ which can in turn be lower bounded using DMRG
algorithm. We leave the study of lower bounds based on DMRG algorithms for a future work.

The rest of the paper is organized as follows. 
Section~\ref{sec:commutators} describes bounds on the diamond-norm and operator-norm distances 
$\delta(U)$ and $\|U-I\|$
that can be expressed in terms of commutators between $U$ and certain observables. 
This section also sketches main ideas behind our algorithm. 
Section~\ref{sec:lightcones} 
collects some basic facts about shallow quantum circuits and 
$D$-dimensional partitions.
Section~\ref{sec:additivity} proves a technical lemma which
relates the norms of global and local commutators. 
Our identity check algorithm and its analysis is presented in Section~\ref{sec:algo1}.
Finally, Section~\ref{sec:numerics} reports a software implementation of our algorithm.

\section{Commutator-based bounds }
\label{sec:commutators}

Our identity check algorithm borrows many ideas from the recent breakthrough
work by Huang, Liu, et al.~\cite{learning2024} on learning shallow quantum circuits.
The main ingredients of our algorithm, described below, are bounds on the diamond-norm
distance $\delta(U)$ that  depend on the
norm of commutators between $U$ and certain observables composed
of SWAP gates. These bounds and their proof are largely based on Ref.~\cite{learning2024}.

Consider $2n$ qubits labeled by integers $1,\ldots,2n$.
Let $W_i$ be the SWAP gate applied to qubits $i$ and $i+n$.
Given a subset $A\subseteq [n]$, define a $2n$-qubit operator
\[
W_A =\prod_{i\in A} W_i.
\]
By definition, $W_A$ acts non-trivially on $2|A|$ qubits.
\begin{lemma}
\label{lemma:comm1}
Let $[n]=A_1\ldots A_m$ be a partition of $n$ qubits into $m$ disjoint subsets
and $U$ be a unitary operator acting on $n$ qubits.
Define a quantity 
\be
\label{gamma_eq1}
\gamma =  \sum_{j=1}^m \|  W_{A_j} (U \otimes I) W_{A_j}  (U^\dag \otimes I) - I\otimes I\|.
\ee
Then 
\be 
\label{gamma_eq2}
\delta(U)  \le \gamma \le m\delta(U)
\ee
assuming that  $\delta(U)<2$ and 
\begin{align} 
\label{gamma_eq3}
\delta(U)  \le 1.16\gamma \le 1.16 m\delta(U)
\end{align}
in the general case. 
\end{lemma}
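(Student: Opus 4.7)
The two inequalities in (\ref{gamma_eq2}) and (\ref{gamma_eq3}) call for complementary arguments: an upper bound $\gamma\le m\,\delta(U)$ obtained term by term, and a lower bound $\delta(U)\le\gamma$ (resp.\ $\delta(U)\le 1.16\,\gamma$) obtained by telescoping. A common preliminary step is to rewrite each summand as a group commutator: since $W_{A_j}^2=I$ and $(U\otimes I)W_{A_j}$ is unitary, multiplying by this unitary on the right preserves the operator norm and yields
\begin{equation}
\|W_{A_j}(U\otimes I)W_{A_j}(U^\dagger\otimes I)-I\|=\|[W_{A_j},U\otimes I]\|.
\end{equation}

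For the lower bound I would telescope the $m$ local commutators into a single global one. Because the $A_j$ are disjoint, the $W_{A_j}$ pairwise commute and their product is the total SWAP $W_{[n]}=\prod_j W_{A_j}$. Iterating the Leibniz rule $[XY,Z]=X[Y,Z]+[X,Z]Y$ gives
\begin{equation}
[W_{[n]},U\otimes I]=\sum_{j=1}^m\Bigl(\prod_{i<j}W_{A_i}\Bigr)[W_{A_j},U\otimes I]\Bigl(\prod_{i>j}W_{A_i}\Bigr),
\end{equation}
so taking operator norms (the surrounding products are unitary) gives $\|[W_{[n]},U\otimes I]\|\le\gamma$. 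The left-hand side is easy to evaluate: conjugation by $W_{[n]}$ exchanges the two copies, so $W_{[n]}(U\otimes I)W_{[n]}=I\otimes U$ and hence $\|[W_{[n]},U\otimes I]\|=\|U\otimes I-I\otimes U\|=\|U\otimes U^\dagger-I\|=d(P_U)$, the diameter of the eigenvalue polygon (the last equality uses $|\mu_i\bar\mu_j-1|=|\mu_i-\mu_j|$ for unit-modulus $\mu_i,\mu_j$). From Fig.~\ref{fig:1} one has $\delta(U)=d(P_U)$ whenever $\delta(U)<2$, giving (\ref{gamma_eq2}). When $0\in P_U$ one has $\delta(U)=2$ while an elementary Euclidean estimate shows $d(P_U)\ge\sqrt{3}$ (the minimum is attained when the eigenvalues of $U$ form an inscribed equilateral triangle), so $\delta(U)/d(P_U)\le 2/\sqrt{3}<1.16$ and (\ref{gamma_eq3}) follows.

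For the upper bound the task reduces to the per-summand inequality $\|[W_A,U\otimes I]\|\le\delta(U)$ for every $A\subseteq[n]$. Equivalently, writing $B=[n]\setminus A$ and denoting by $A_i,B_i$ the $A$- and $B$-qubits of the $i$-th copy, one must show $\|U_{A_1B_1}-U_{A_2B_1}\|\le\delta(U)$: relocating the $A$-component of $U$'s action from the first copy to the second, while its $B$-component stays on $B_1$, costs at most $\delta(U)$ in operator norm. This is the main obstacle. Naive routes lose a factor of two---the triangle inequality through the fully-disjoint $U_{A_2B_2}$ yields only $\le 2\delta(U)$, and the generic bound $\|[X,Y]\|\le 2\min_c\|Y-cI\|$ produces the Chebyshev circumradius of $U$'s spectrum rather than its diameter. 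Following the analysis of Huang, Liu et al.~\cite{learning2024}, one passes to the numerical range of $V=W_A(U\otimes I)W_A(U^\dagger\otimes I)$ via $\|V-I\|^2=2-2\min_\psi\mathrm{Re}\langle\psi|V|\psi\rangle$, recognizes this inner product as the numerical range of $M=U^\dagger_{A_1B_1}U_{A_2B_1}$, and exploits the shared $B_1$-register to show $\mathrm{Re}\langle\phi|M|\phi\rangle\ge 2r^2-1$ with $r=\min_\psi|\langle\psi|U|\psi\rangle|$ the distance from $P_U$ to the origin; combined with $\delta(U)=2\sqrt{1-r^2}$ this closes the estimate, and summing over $j$ gives $\gamma\le m\,\delta(U)$.
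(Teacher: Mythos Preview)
Your lower-bound argument is correct and is essentially the paper's: the Leibniz telescope $[W_{[n]},U\otimes I]=\sum_j(\prod_{i<j}W_{A_i})[W_{A_j},U\otimes I](\prod_{i>j}W_{A_i})$ is the commutator version of the paper's product inequality $\|P_1\cdots P_m-Q_1\cdots Q_m\|\le\sum_j\|P_j-Q_j\|$ applied with $P_j=(U\otimes I)W_{A_j}(U^\dag\otimes I)$ and $Q_j=W_{A_j}$, and your identification $\|[W_{[n]},U\otimes I]\|=\|U\otimes U^\dag-I\|=\mathrm{diam}(P_U)$ matches the paper's Eqs.~(\ref{sec1eq1})--(\ref{sec1eq2}). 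The $\delta(U)=2$ case via the $\sqrt{3}$ lower bound on the diameter (equilateral triangle) is also exactly the paper's argument.

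The upper bound, however, has a genuine gap. Your steps (pass to $\|V-I\|^2=2-2\min_\psi\mathrm{Re}\langle\psi|V|\psi\rangle$, change variables to $M=U^\dag_{A_1B_1}U_{A_2B_1}$) are correct but purely cosmetic: a two-line computation shows that $2-2\,\mathrm{Re}\langle\phi|M|\phi\rangle=\|(U_{A_1B_1}-U_{A_2B_1})|\phi\rangle\|^2$, so the asserted inequality $\mathrm{Re}\langle\phi|M|\phi\rangle\ge 2r^2-1$ is \emph{equivalent} to the very bound $\|U_{A_1B_1}-U_{A_2B_1}\|\le\delta(U)$ you set out to prove. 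The phrase ``exploits the shared $B_1$-register'' is not an argument, and I do not see how the shared register alone yields this without further input. The paper's route avoids all of this: it invokes the dual (Heisenberg-picture) characterization of the diamond norm~\cite{watrous2009semidefinite},
\[
\delta(U)=\max_{\|V\|\le 1}\bigl\|(U\otimes I)V(U^\dag\otimes I)-V\bigr\|,
\]
and simply takes $V=W_{A_j}$ to conclude that each summand of $\gamma$ is at most $\delta(U)$. This one-line argument replaces your numerical-range detour and works uniformly whether $\delta(U)<2$ or $\delta(U)=2$.
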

The quantity $\gamma$ defined in Eq.~(\ref{gamma_eq1}) 
or its rescaled version 
$1.16\gamma$
will be the desired estimator of 
the distance $\delta(U)$.
In the next section we show how to 
choose a partition $[n]=A_1\ldots A_m$ with $m=D+1$
parts  such that 
each subset $A_j$ is a union of well-separated hypercubes of linear size $O(hD)$
and all commutators 
 $W_{A_j} (U \otimes I) W_{A_j}  (U^\dag \otimes I)$
 that appear in Eq.~(\ref{gamma_eq1}) are tensor products of local commutators
supported on individual hypercubes. Our construction is based on Ref.~\cite{woude2022geometry}
which introduced  so-called
reclusive partitions of the $D$-dimensional Euclidean space. 
The key ingredient of our algorithm is an additivity lemma stated in Section~\ref{sec:additivity}.
This lemma expresses the norm of commutators
$\|  W_{A_j} (U \otimes I) W_{A_j}  (U^\dag \otimes I) - I\otimes I\|$
in terms of the norm of analogous local commutators supported on individual hypercubes. 
Each local commutator acts on a subset of at most $O(hD)^D$ qubits
and its eigenvalues can be computed by the exact diagonalization.
 The additivity lemma then
provides a linear time algorithm for computing the norm of global commutators $\|  W_{A_j} (U \otimes I) W_{A_j}  (U^\dag \otimes I) - I\otimes I\|$ which is all we need
to compute the estimator $\gamma$ defined in Lemma~\ref{lemma:comm1}.

The next lemma shows that estimation of the operator-norm distance
can be reduced to estimation of diamond-norm distance given any point in the eigenvalue polygon of $U$.
\begin{lemma}
\label{lemma:reduction}
Let $t\in P_U$ be any point in the eigenvalue polygon of $U$ and $\alpha,\gamma$ be  real numbers such that 
$\delta(U)\le \gamma \le \alpha \delta(U)$. Then
\[
\gamma_{op} = \gamma + |t-1|
\]
obeys
\[
\|U-I\|\le \gamma_{op}\le (1+2\alpha) \|U-I\|.
\]
\end{lemma}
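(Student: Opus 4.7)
The proof has two directions, each relying on basic geometric facts about the eigenvalue polygon $P_U$ summarized in Fig.~\ref{fig:1}, together with the identity $\|U-I\| = \max_{\lambda} |\lambda - 1|$, where the maximum is over eigenvalues $\lambda$ of $U$. Geometrically, $\|U-I\|$ is the largest distance from the point $1$ to a vertex of $P_U$.

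\textbf{Upper bound $\|U-I\| \le \gamma_{op}$.} The plan is to route a triangle inequality through the auxiliary point $t$. For any eigenvalue $\lambda$ of $U$,
\[
|\lambda - 1| \le |\lambda - t| + |t-1| \le \mathrm{diam}(P_U) + |t-1|,
\]
since both $\lambda$ and $t$ lie in $P_U$. I then observe that $\mathrm{diam}(P_U) \le \delta(U)$: if the origin lies outside $P_U$, Fig.~\ref{fig:1} gives $\delta(U) = \mathrm{diam}(P_U)$, while if the origin lies in $P_U$, then $\delta(U)=2$ and the diameter trivially cannot exceed $2$. Maximizing over $\lambda$ and using $\delta(U)\le \gamma$ yields $\|U-I\| \le \gamma + |t-1| = \gamma_{op}$.

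\textbf{Lower bound $\gamma_{op} \le (1+2\alpha)\|U-I\|$.} I bound the two summands in $\gamma_{op}$ separately in terms of $\|U-I\|$. For $|t-1|$, convexity does the work: since $t \in P_U$, it is a convex combination $t = \sum_i p_i \lambda_i$ of eigenvalues of $U$, so $|t-1| = |\sum_i p_i(\lambda_i - 1)| \le \max_i |\lambda_i-1| = \|U-I\|$. For $\gamma \le \alpha \delta(U)$, I need the companion estimate $\delta(U) \le 2\|U-I\|$. This splits into two cases depending on whether $P_U$ contains the origin. If not, Fig.~\ref{fig:1} gives $\delta(U) = \mathrm{diam}(P_U) = \max_{i,j}|\lambda_i-\lambda_j|$, and by triangle inequality through $1$ this is $\le 2\|U-I\|$. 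If the origin lies in $P_U$, then $\delta(U)=2$, and I must show $\|U-I\| \ge 1$; writing $0 = \sum_i p_i \lambda_i$ for some probability weights $p_i$, I get $1 = |\sum_i p_i(\lambda_i-1)| \le \|U-I\|$. Combining,
\[
\gamma_{op} \le \alpha\delta(U) + \|U-I\| \le (1+2\alpha)\|U-I\|.
\]

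The only mildly delicate point is the case split $0 \in P_U$ vs.\ $0 \notin P_U$, since the formula for $\delta(U)$ changes regime there; but each case is handled by a one-line convexity or triangle inequality argument, so I do not expect a substantive obstacle.
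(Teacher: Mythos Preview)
Your proof is correct and follows essentially the same approach as the paper: both directions use the triangle inequality routed through $t$, the convexity bound $|t-1|\le \|U-I\|$, and the inequality $\delta(U)\le 2\|U-I\|$. The only differences are cosmetic---the paper cites $\delta(U)\le 2\|U-I\|$ as a known fact while you supply a short self-contained case split, and the paper phrases the upper bound via operator-norm triangle inequalities on $U-tI+tI-I$ rather than the equivalent pointwise eigenvalue argument you use.
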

In the rest of this section we prove Lemma~\ref{lemma:comm1} and \ref{lemma:reduction}.

\begin{proof}[\bf Proof of Lemma~\ref{lemma:comm1}]
Consider first the case $\delta(U)<2$. We claim that in this case
\be
\label{sec1eq1}
\delta(U)= \| U\otimes U^\dag - I\otimes I\|.
\ee
Indeed, since $\delta(U)<2$,
the eigenvalue polygon $P_U$ does not contain the origin and thus $\delta(U)$ coincides with the diameter of $P_U$, see Fig.~\ref{fig:1}.
Let $\{e^{i\varphi_a}\}_a$ be eigenvalues of $U$. By definition, $P_U$ is the convex hull of points $\{e^{i\varphi_a}\}_a$. Hence the diameter of $P_U$ coincides with the maximum distance between eigenvalues of $U$. This shows that 
\begin{align*}
\delta(U)& =\mathrm{diam}(P_U)
=\max_{a,b}|e^{i\varphi_a} - e^{i\varphi_b}| \\
&=\max_{a,b}|e^{i(\varphi_a-\varphi_b)} - 1| \\
& = \|U\otimes U^\dag - I\otimes I\|.
\end{align*}
To get the last equality we noted that $\{e^{i(\varphi_a-\varphi_b)}-1\}_{a,b}$ is the set of eigenvalues of $U\otimes U^\dag-I\otimes I$.

Let us agree that the tensor product in Eq.~(\ref{sec1eq1}) separates two $n$-qubit registers
that span qubits $\{1,\ldots,n\}$ and $\{n+1,\ldots,2n\}$.
Let $W=\prod_{i=1}^n W_i$ be an operator that swaps the two registers.
Since the operator norm is unitarily invariant, 
Eq.~(\ref{sec1eq1}) gives
\begin{align}
\label{sec1eq2}
\delta(U) & = \| (U\otimes U^\dag -  I\otimes I)W\| \nonumber \\
& = \| (U\otimes I) W (U^\dag  \otimes I) - W\|.
\end{align}
Here we noted that $(I\otimes U^\dag) W  = W (U^\dag \otimes I)$.
The triangle inequality implies that 
for any unitary operators $P_j,Q_j$ one has 
\be
\label{triangle_inequality}
\| P_1 P_2 \cdots P_m - Q_1 Q_2 \cdots Q_m \| \le \sum_{j=1}^m \| P_j - Q_j\|.
\ee
Choosing $P_j = (U\otimes I) W_{A_j}  (U^\dag \otimes I)$, $Q_j = W_{A_j}$,
and noting that $W=\prod_{j=1}^m W_{A_j}$ one arrives at
\be
\label{sec1eq3}
\delta(U) \le \sum_{j=1}^m \| (U \otimes I) W_{A_j}  (U^\dag  \otimes I) - W_{A_j}\|=\gamma.
\ee
The last equality uses the fact that $W_{A_j}$ are both hermitian and unitary, which implies
$\|O-W_{A_j}\|=\|W_{A_j} O-I\|$ for any operator $O$.
The dual characterization of the diamond-norm~\cite{watrous2009semidefinite} gives
\be
\label{sec1eq4}
\delta(U)=\max_{V\, : \, \|V\|\le 1}\; \| (U \otimes I) V (U^\dag \otimes I) - V\|
\ee
where the maximization is over $2n$-qubit operators $V$.
Since $\| W_{A_j}\|=1$ one infers that
\[
 \| (U \otimes I) W_{A_j}  (U^\dag  \otimes I) - W_{A_j}\|\le \delta(U)
\]
for all $j$ and thus 
$\gamma \le m \delta(U)$. This concludes the proof in the case $\delta(U)<2$.

Suppose now  that $\delta(U)=2$. Then 
the eigenvalue polygon $P_U$ contains the origin, see Fig.~\ref{fig:1}. Let $\{e^{i\varphi_a}\}_a$ be the eigenvalues of $U$.
We claim that there exist eigenvalues $e^{i\varphi_0}, e^{i\varphi_1}$ of $U$ such that the shortest arc length between them is at least $2\pi/3$. Otherwise, all eigenvalues would lie within an arc of length $2\pi/3$, 1/3 of the unit circle --- but this would imply that $P_U$ does not contain the origin.
Thus
\begin{align}
    \| U\otimes U^\dag - I\otimes I\| & = \max_{a,b} |e^{i(\varphi_a-\varphi_b)}-1|  \\
    &\geq 
    |e^{i(\varphi_0-\varphi_1)}-1|\\
    &\ge
    |e^{i2\pi/3}-1| 
    \\
    &= 2\sin\left(\pi/3\right) = \sqrt{3}.
\end{align}
Therefore we have
\begin{align}
    \gamma \geq \| U\otimes U^\dag - I\otimes I\| \geq \sqrt{3}
\end{align}
so
\begin{align}
    \frac{2}{\sqrt{3}}\gamma \geq 2 = \delta(U).
\end{align}
Furthermore, our proof of the upper bound $\gamma \leq m \delta(U)$ is unchanged when $\delta(U)=2$. The desired bound, Eq.~(\ref{gamma_eq3}) follows since $1.16 \geq \frac{2}{\sqrt{3}}$.
\end{proof}

\begin{proof}[\bf Proof of Lemma~\ref{lemma:reduction}]

Let $\{e^{i\varphi_a}\}_a$ be eigenvalues of $U$ and $t=\sum_a p_a e^{i\varphi_a}$, where
$p_a\ge 0$ and $\sum_a p_a=1$. We have 
\begin{align*}
\|U-I\| & = \|U-tI+tI-I\|\\
& \le |t-1| + \| \sum_a p_a (U - e^{i\varphi_a} I)\|\\
& \le |t-1| + \sum_a p_a \| U- e^{i\varphi_a} I\| \\
& \le |t-1| + \max_a \|U - e^{i\varphi_a}I\| \\
& = |t-1| + \max_{a,b} |e^{i\varphi_a} - e^{i\varphi_b}| \\
& \le  |t-1|+\delta(U) \le |t-1| + \gamma.
\end{align*}
Conversely, it is well known~\cite{aharonov1998quantum} that $\delta(U)\le 2\|U-I\|$ for any untary $U$. Thus
\begin{align*}
|t-1| +\gamma & =
\left| \sum_a p_a (e^{i\varphi_a}-1)\right| + \gamma \\
& \le  \sum_a p_a |e^{i\varphi_a} - 1| + \alpha \delta(U)\\
& \le \max_a |e^{i\varphi_a} - 1| + 2\alpha \|U-I\| \\
&=(1+2\alpha)\|U-I\|.
\end{align*}

\end{proof}

\section{Lightcones and reclusive partitions}
\label{sec:lightcones}

Given a quantum circuit $U$ acting on $n$ qubits, 
the lightcone $\calL(j)$ of a qubit $j\in [n]$ is defined as the set of all output qubits $i\in [n]$
that can be reached by moving through the circuit diagram forward in time starting from the input qubit $j$.
For example, if $U$ is  a one-dimensional  circuit of depth $h$  then 
\be
\label{lightcone1}
\calL(j)\subseteq [j-h,j+h].
\ee
For any subset of qubits $S\subseteq [n]$ let $\calL(S)$ be the lightcone of $S$ defined as
\be
\label{lightcone2}
\calL(S)=\bigcup_{j\in S} \calL(j).
\ee
We say that a subset of qubits  $S$ is the support of an operator $O$ and write $S=\mathrm{supp}(O)$ if
$O$ acts trivially on all qubits $j\notin S$. By definition, 
\be
\mathrm{supp}(UOU^\dag) \subseteq \calL(\mathrm{supp}(O))
\ee
for any operator $O$. Furthermore, $UOU^\dag = U_{loc} O U_{loc}^\dag$, where
$U_{loc}$ is a "localized" circuit obtained from $U$ by removing all gates
acting on qubits outside of the lightcone $\calL(\mathrm{supp}(O))$.

Two subsets of qubits $S_1$ and $S_2$ are said to be lightcone separated if $\calL(S_1)\cap \calL(S_2)=\emptyset$.
If $O_1$ and $O_2$ are operators supported on $S_1$ and $S_2$ then 
$UO_1 O_2 U^\dag$ is a product of operators $UO_1U^\dag$ and $UO_2 U^\dag$ with disjoint supports.

\begin{figure}[t]
         \includegraphics[width=0.4\textwidth]{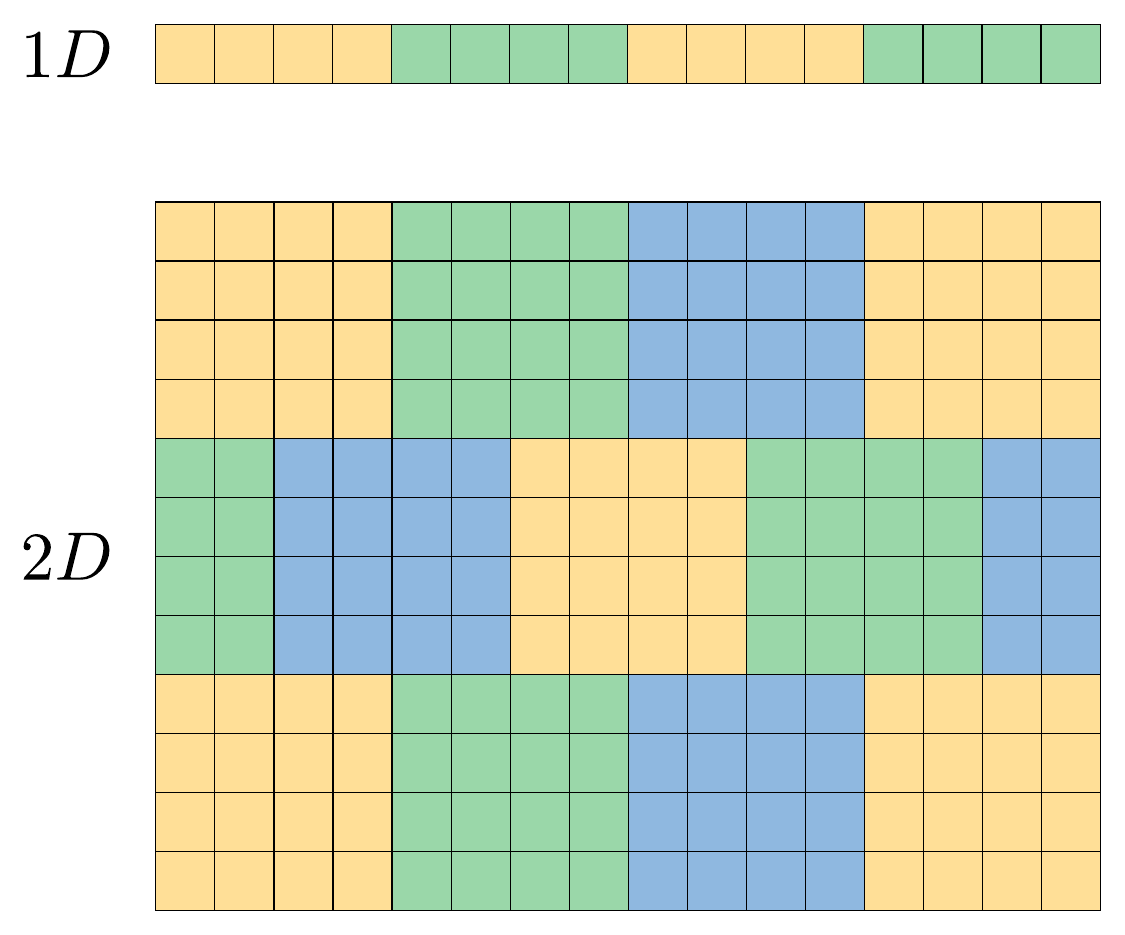}
     \caption{Examples of reclusive partitions for $D=1,2$.
     Qubits are located at cells of a $D$-dimensional rectangular array.
     The array is partitioned into $D+1$ sets $A_1,\ldots,A_{D+1}$ such that 
     each set $A_j$  is a disjoint union of  $D$-dimensional cubes of linear size $L$ 
     and the 
     distance between any pair of cubes from the same set $A_j$ is at least $L/D$.
     Here $L=4$.
     Cubes located near the boundary of the array are truncated.
     The sets $A_1,A_2,A_3$ are highlighted in yellow, green, and blue.}  
     \label{fig:partition}
 \end{figure}

Suppose now that $n$ qubits are located at cells of a $D$-dimensional rectangular array.
We shall consider partitions of the array into $D$-dimensional cubes
known as reclusive partitions~\cite{woude2022geometry}.
The linear size of each cube in the partition will be chosen as
\be
\label{cube_size}
L=2Dh,
\ee 
where $h$ is the depth of $U$.
\begin{lemma}[\bf Reclusive Partitions~\cite{woude2022geometry}]
\label{lemma:partition}
One can partition cells of a $D$-dimensional rectangular array  into $D+1$ 
sets $A_1,\ldots,A_{D+1}$ such that each set $A_j$
is a disjoint union of $D$-dimensional cubes of linear size $L$
and the distance between any pair of cubes from the same set $A_j$ is at least $L/D$.
The above partition can be constructed efficiently.
\end{lemma}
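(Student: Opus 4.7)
The plan is to construct the partition explicitly by induction on the dimension $D$, following the recursive construction of Ref.~\cite{woude2022geometry}. The base case $D=1$ is immediate: I partition the $n$ cells into consecutive intervals of length $L$ and alternately assign them to $A_1$ and $A_2$. Same-color intervals then differ in index by at least $2$, so their Euclidean distance is at least $L = L/1$, and the construction is clearly efficient.

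For the inductive step, I would assume a reclusive partition of the $(D-1)$-dimensional array exists with $D$ color classes, cube side length $L$, and same-color separation at least $L/(D-1)$. Viewing the $D$-dimensional array as a stack of $(D-1)$-dimensional slabs of thickness $L$ along the last coordinate, I apply the inductive partition within each slab using colors $A_1,\ldots,A_D$. Between consecutive slabs I cyclically permute the labels so that no color is fixed across a slab boundary, and introduce the new color $A_{D+1}$ to absorb residual conflicts, simultaneously translating the inductive partition by $L/D$ in one of the transverse coordinates so that same-color cubes in adjacent slabs become shifted relative to each other. Concretely, for $D=2$ this produces a slab pattern in which slab $m$ alternates the two colors $\{A_{(m \bmod 3)+1}, A_{((m+1)\bmod 3)+1}\}$, cycling through all three of $A_1,A_2,A_3$ with period $3$.

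The hard part will be verifying the $L/D$ separation across slab boundaries. Same-color cubes within a single slab inherit separation $\geq L/(D-1) \geq L/D$ from the induction, and cubes in slabs that are at least two layers apart are already at distance $\geq L$ along the new coordinate. The critical case is adjacent slabs, where the vertical distance is zero; here the cyclic color relabeling together with the transverse offset of $L/D$ forces any two same-color cubes to be separated by at least $L/D$ in some coordinate direction. Efficiency is immediate, since the color of each cell can be computed from its coordinates by a simple modular formula in $O(D)$ arithmetic operations, yielding total construction time $O(nD)$. Cubes that would extend beyond the finite rectangular array are simply truncated, which can only decrease their extent and therefore does not weaken the separation bound.
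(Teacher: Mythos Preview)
Your inductive approach is different from the paper's, which gives a direct, non-recursive construction: it fixes an explicit upper-triangular matrix $A$ with $A_{i,j}=(D-j+1)/D$ for $i<j$, takes the lattice $\calL_A$ of integer combinations of its columns, tiles $\RR^D$ by unit cubes anchored at lattice points, and then invokes two results of Ref.~\cite{woude2022geometry} (their Lemmas~7.15/7.19 and Theorem~7.16) giving the $1/D$ separation and the $(D{+}1)$-coloring. A final rescaling by $L$ and embedding of the integer array completes the argument. Note in particular that the $D$-dimensional lattice does \emph{not} contain the $(D{-}1)$-dimensional one as a sublattice (the off-diagonal entries change with $D$), so the paper's construction is genuinely not inductive.

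More importantly, your inductive step has a real gap at exactly the place you flagged as ``the hard part.'' The assertion that the cyclic relabeling plus a transverse shift of $L/D$ forces same-color cubes in adjacent slabs to be $L/D$ apart is false under the natural reading of your construction. Take $D=2$ with your stated rule that slab $m$ uses colors $A_{(m\bmod 3)+1},A_{((m+1)\bmod 3)+1}$ in the alternating $1$D pattern, shifted by $mL/2$. Then slab~$2$ (offset $L\equiv 0$) has $A_1$ on $[L,2L)$, while slab~$3$ (offset $3L/2\equiv L/2$) has $A_1$ on $[L/2,3L/2)$; these intervals overlap on $[L,3L/2)$, so the two $A_1$ cubes meet along a face and the distance is~$0$, not $L/2$. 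The underlying problem is a period mismatch: your geometric shift has period $D$ in the slab index (since $D\cdot L/D=L$), while your color cycle has period $D{+}1$, and the two get out of phase. Relatedly, your $(D{-}1)$-dimensional hypothesis uses only $D$ colors with period $DL$ along each row, whereas the correct $D$-dimensional pattern (as in the paper's brick tiling for $D=2$) uses all $D{+}1$ colors within every slab, with period $(D{+}1)L$. The phrase ``introduce $A_{D+1}$ to absorb residual conflicts'' does not specify which cubes get recolored, and no local recoloring patches the counterexample above without breaking the cube structure or the inductive picture. If you want to salvage an inductive proof you will need a substantially different relabeling/offset scheme and an actual verification of the adjacent-slab case; as written, the step does not go through.
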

Figure~\ref{fig:partition} shows examples of 1D and 2D reclusive partitions,
see Ref.~\cite{woude2022geometry} for the 3D example.
We defer the proof
of Lemma~\ref{lemma:partition} to Appendix~\ref{app:partition} since it is a simple rephrasing
of the results established in~\cite{woude2022geometry}.
By construction, 
each cube in the partition  contains at most $L^D$ qubits (cubes located near the boundary of the array may be truncated)
and
any pair of cubes from the same set $A_j$ is lightcone separated due to Eq.~(\ref{cube_size}).
Write
\[
A_j = A_{j,1} A_{j,2} \ldots A_{j,\ell_j},
\]
where $\ell_j$ is the number of cubes in $A_j$ 
and $A_{j,p}$ denotes the $p$-th cube in $A_j$. By constriction, we have
\be
\label{LCseparation}
\calL(A_{j,p}) \cap \calL(A_{j,q})=\emptyset \quad \mbox{for all $p\ne q$}.
\ee
Since the lightcone of a cube with a linear size $L$ can be enclosed by a cube
of linear size $L+2h$, the number of qubits contained in any lightcone $\calL(A_{j,p})$ is bounded as
\be
\label{lightcone_size}
|\calL(A_{j,p})| \le  (2h(D+1))^D.
\ee
Here we used Eq.~(\ref{cube_size}).

Consider the diamond-norm distance $\delta(U)$ and specialize 
the commutator-based bound
of Lemma~\ref{lemma:comm1} to the reclusive partition $[n]=A_1\ldots A_{D+1}$.
By definition, 
\[
W_{A_j} = \prod_{p=1}^{\ell_j} W_{A_{j,p}}.
\]
Lightcone separation of cubes $A_{j,p}$ implies that operators
$(U\otimes I) W_{A_{j,p}} (U^\dag \otimes I)$ acts on pairwise disjoint subsets of qubits.
Thus
\be
W_{A_j} (U \otimes I) W_{A_j}  (U^\dag \otimes I) =\prod_{p=1}^{\ell_j} K_{j,p},
\ee
where we defined commutators
\[
K_{j,p} = W_{A_{j,p}} (U \otimes I) W_{A_{j,p}}  (U^\dag \otimes I).
\]
The above shows that $K_{j,p}$ are operators acting on pairwise disjoint subsets of qubits (for a fixed $j$).
Let $U_{j,p}$ be a "localized" circuit obtained from $U$ by replacing all gates acting on at least one qubit
outside of the lightcone $\calL(A_{j,p})$ with the identity. Then $U_{j,p}$ acts non-trivially only
on the lightcone $\calL(A_{j,p})$ and
\[
K_{j,p} = W_{A_{j,p}} (U_{j,p} \otimes I) W_{A_{j,p}}  (U_{j,p}^\dag \otimes I).
\]
The support of $K_{j,p}$  includes all qubits in the left $n$-qubit register contained in $\calL(A_{j,p})$ 
as well as all qubits in the right $n$-qubit register contained in $A_{j,p}$. Thus
\begin{align*}
|\mathrm{supp}(K_{j,p})| & \le |\calL(A_{j,p})| + |A_{j,p}| \\
&  \le  (2h(D+1))^D + (2hD)^D \\
& = (2hD)^D \left[(1+1/D)^D +1 \right] \le 4(2hD)^D.
\end{align*}
Eigenvalues of a unitary operator acting on $m$ qubits can be computed in time $O(2^{3m})$ by 
the exact diagonalization of a unitary $2^m\times 2^m$ matrix. 
Thus one can compute all eigenvalues of the commutator $K_{j,p}$ 
 in time 
\[
T\sim 2^{12 (2hD)^D}.
\]
In the next section we show that the norm
\[
\| W_{A_j} (U \otimes I) W_{A_j}  (U^\dag \otimes I) -I\otimes I\|=\| \prod_{p=1}^{\ell_j} K_{j,p} - I\otimes I\|
\]
 that appears in the bound
of Lemma~\ref{lemma:comm1} is a simple function 
of  eigenvalues of individual commutators $K_{j,p}$.

\section{Additivity lemma}
\label{sec:additivity}

In this section we show how to compute the norm
of commutators
that appear in Lemma~\ref{lemma:comm1}.
First, let us introduce some terminology.
Let $S^1=\{ z\in \CC\, : \, |z|=1\}$ be the unit circle. If $U$ is a unitary operator, let  $\mathsf{eig}(U)\subseteq S^1$
be the set of eigenvalues of $U$ (ignoring multiplicities).
Consider $2n$ qubits, a subset $A\subseteq [n]$,  and a SWAP operator
$W_A=\prod_{i\in A} W_i$ where $W_i$ is the SWAP gate acting on qubits $i$ and $i+n$.
Consider a commutator
\[
K_A=W_A (U\otimes I) W_A (U^\dag \otimes I).
\]
We claim that $\mathsf{eig}(K_{A})=\mathsf{eig}(K_{A}^\dag)$.
Indeed, $K_{A}^\dag = W_A K_{A} W_A$.
Since $W_A$ is both unitary and hermitian, conjugation by $W_A$ does not change the eigenvalue spectrum.
Thus eigenvalues of $K_{A}$ have a form $e^{\pm i\varphi}$ with $0\le \varphi\le \pi$.
For each $\varphi$ one can choose both positive and negative sign in the exponent.
 Define a function $\theta$ that maps subsets of qubits $A\subseteq [n]$ to
 real numbers in the interval $[0,\pi]$ such that 
\be
\theta(A)=\max_{\varphi\in [0,\pi]} \varphi \quad \mbox{subject to} \quad e^{i\varphi} \in \mathsf{eig}(K_A).
\ee
Note that  $e^{i\theta(A)}$ is the unique eigenvalue
of $K_A$ with  the maximum distance from $1$
and a non-negative imaginary part.
Accordingly, 
\be
\label{theta_vs_distance}
\|K_A -I\| = |e^{i\theta(A)}-1|.
\ee
We shall need the following simple fact. 
\begin{lemma}
\label{lemma:badness}
If $\theta(A)\ge \pi/2$ for some subset $A\subseteq [n]$
then  $\delta(U)\ge \sqrt{2}$.
\end{lemma}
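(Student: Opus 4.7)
The plan is to bound $\delta(U)$ from below by $\|K_A - I\|$ using the dual characterization of the diamond norm established earlier in Eq.~(\ref{sec1eq4}), and then to read off a numerical lower bound from the assumption $\theta(A) \ge \pi/2$.

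First, I would observe that the dual characterization
\[
\delta(U)=\max_{V\,:\,\|V\|\le 1}\; \|(U\otimes I) V (U^\dag \otimes I) - V\|
\]
may be applied with the test operator $V = W_A$, since $W_A$ is unitary and hence satisfies $\|W_A\|=1$. This yields
\[
\delta(U) \;\ge\; \|(U\otimes I) W_A (U^\dag \otimes I) - W_A\|.
\]
Next I would left-multiply the argument of the norm by $W_A$, which is both unitary and hermitian (so $W_A^2=I$ and the operator norm is preserved). This converts the right-hand side into $\|W_A (U\otimes I) W_A (U^\dag \otimes I) - I\otimes I\| = \|K_A - I\|$, exactly as in the analogous manipulation used at the end of the proof of Eq.~(\ref{sec1eq3}).

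Now I would invoke Eq.~(\ref{theta_vs_distance}), which gives $\|K_A - I\| = |e^{i\theta(A)} - 1|$. Combining the two bounds yields $\delta(U) \ge |e^{i\theta(A)} - 1|$. The function $\varphi \mapsto |e^{i\varphi}-1|=2\sin(\varphi/2)$ is monotonically nondecreasing on $[0,\pi]$, so the hypothesis $\theta(A)\ge \pi/2$ gives $|e^{i\theta(A)}-1| \ge |e^{i\pi/2}-1| = \sqrt{2}$, and the lemma follows.

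The argument is essentially a one-line application of the dual formula, so I do not anticipate any real obstacle; the only thing worth double-checking is that the left-multiplication by $W_A$ is a legitimate operator-norm-preserving step, which is immediate since $W_A$ is unitary. No new ingredients beyond Eq.~(\ref{sec1eq4}), Eq.~(\ref{theta_vs_distance}), and the definition of $\theta(A)$ are required.
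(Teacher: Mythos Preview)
Your proof is correct and follows essentially the same approach as the paper: both apply the dual characterization of the diamond norm with $V=W_A$ to obtain $\delta(U)\ge \|K_A-I\|$, and then use $\theta(A)\ge \pi/2$ to lower bound this by $\sqrt{2}$. The only cosmetic difference is that the paper phrases the final step geometrically (an eigenvalue with non-positive real part forces distance at least $\sqrt{2}$ from $1$), whereas you invoke Eq.~(\ref{theta_vs_distance}) and monotonicity of $2\sin(\varphi/2)$ directly.
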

\begin{proof}
From $\theta(A)\ge \pi/2$ one infers that 
 $K_A$
has an eigenvalue  with a non-positive real part. 
Since  all points on the unit circle within distance less than $\sqrt{2}$ from $1$ have a positive real part,
one gets $\|K_A-I\| \ge \sqrt{2}$. The dual characterization of the diamond norm~\cite{watrous2009semidefinite} gives
\begin{align*}
\delta(U)& =\max_{\eta\, : \, \|\eta\|\le 1} \; \| (U\otimes I) \eta (U^\dag \otimes I) - \eta \| \\
& \ge \| (U\otimes I) W_A (U^\dag \otimes I) - W_A\| = \|K_A-I\| \ge \sqrt{2}.
\end{align*}
\end{proof}
\begin{dfn}
A subset $A\subseteq [n]$  is called good if $\theta(A)<\pi/2$.
Otherwise $A$ is called bad. 
\end{dfn}
The following lemma shows that the function $\theta(A)$ is additive under the union of  lightcone-separated subsets, provided that the circuit $U$ is sufficiently close to the identity.
\begin{lemma}[\bf Additivity]
Suppose  $A_1,A_2\subseteq [n]$ are good lightcone-separated subsets.
Consider two cases:
\begin{enumerate}
\item[(a)] $\theta(A_1)+\theta(A_2)<\pi/2$,
\item[(b)] $\theta(A_1)+\theta(A_2)\ge \pi/2$.
\end{enumerate}
Case (a) implies that the union $A_1 A_2$ is good and 
\be
\label{theta_additive}
\theta(A_1A_2)=\theta(A_1)+\theta(A_2).
\ee
Case (b) implies that $\delta(U)\ge \sqrt{2}$.
\end{lemma}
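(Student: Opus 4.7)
The plan is to reduce the statement about $\theta$ to a statement about eigenvalues of a product of commuting operators, using the lightcone separation in an essential way.

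First I would prove the multiplicative identity $K_{A_1 A_2} = K_{A_1}K_{A_2}$. Since $A_1,A_2$ are disjoint one has $W_{A_1 A_2} = W_{A_1}W_{A_2}$, so writing $\tilde W_{A_j} = (U\otimes I)W_{A_j}(U^\dagger\otimes I)$ gives $K_{A_1A_2} = W_{A_1}W_{A_2}\tilde W_{A_1}\tilde W_{A_2}$. As explained in Section~\ref{sec:lightcones}, the support of $\tilde W_{A_j}$ is contained in $\calL(A_j)$ in the left register and $A_j$ in the right register; because $\calL(A_1)\cap\calL(A_2)=\emptyset$ and $A_j\subseteq\calL(A_j)$, the operators $\tilde W_{A_1}$ and $W_{A_2}$ have disjoint supports and commute. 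Swapping them gives $K_{A_1A_2}=W_{A_1}\tilde W_{A_1}W_{A_2}\tilde W_{A_2}=K_{A_1}K_{A_2}$. The same disjointness argument shows that $K_{A_1}$ and $K_{A_2}$ themselves commute and act on disjoint sets of qubits.

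Next I would read off the eigenvalues. Since $K_{A_1}$ and $K_{A_2}$ live on disjoint qubits, $\mathsf{eig}(K_{A_1A_2})$ is the set of products $\{\lambda_1\lambda_2:\lambda_j\in\mathsf{eig}(K_{A_j})\}$. By the symmetry $\mathsf{eig}(K_A)=\mathsf{eig}(K_A^\dagger)$ already noted in the excerpt, each $\mathsf{eig}(K_{A_j})$ is a subset of the unit circle of the form $\{e^{i\varphi}:\varphi\in\Phi_j\}$ with $\Phi_j\subseteq[-\theta(A_j),\theta(A_j)]$ symmetric about $0$ and containing $\pm\theta(A_j)$. Since $A_1,A_2$ are good, $\theta(A_j)<\pi/2$, so every phase of $K_{A_j}$ lies strictly inside $(-\pi/2,\pi/2)$. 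The phases of $K_{A_1A_2}$ are therefore sums $\varphi_1+\varphi_2$, and the maximum of $|\varphi_1+\varphi_2|$ equals $\theta(A_1)+\theta(A_2)$, which is strictly less than $\pi$; in particular no wrap-around around the unit circle occurs.

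In case (a) this maximum is strictly below $\pi/2$, so $A_1A_2$ is good and $\theta(A_1A_2)=\theta(A_1)+\theta(A_2)$, which is exactly Eq.~(\ref{theta_additive}). In case (b) the same computation gives $\theta(A_1A_2)=\theta(A_1)+\theta(A_2)\ge\pi/2$, so $A_1A_2$ is bad, and applying Lemma~\ref{lemma:badness} to the subset $A_1A_2$ immediately yields $\delta(U)\ge\sqrt{2}$.

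The only non-routine step is the factorization $K_{A_1A_2}=K_{A_1}K_{A_2}$: once the supports of $\tilde W_{A_j}$ are correctly identified in both registers, everything else is a bookkeeping exercise with phases on the unit circle. All remaining arithmetic is controlled by the goodness assumption $\theta(A_j)<\pi/2$, which keeps the sum of phases in the regime where the additivity does not wrap around.
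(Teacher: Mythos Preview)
Your proposal is correct and follows essentially the same approach as the paper: factor $K_{A_1A_2}=K_{A_1}K_{A_2}$ via lightcone separation, read off the product spectrum, and control the phase bookkeeping using $\theta(A_j)<\pi/2$. The only minor difference is that in case~(b) you establish the equality $\theta(A_1A_2)=\theta(A_1)+\theta(A_2)$, whereas the paper only shows $\theta(A_1A_2)\ge\pi/2$; both suffice for invoking Lemma~\ref{lemma:badness}.
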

\begin{proof}
Define commutators
\[
K_p=W_{A_p} (U\otimes I) W_{A_p} (U^\dag \otimes I)
\]
with $p\in \{1,2\}$.
Since $A_1$ and $A_2$ have lightcone separated,
$K_1$ and $K_2$ act on disjoint subsets of qubits and thus 
\[
K_{12} \equiv  W_{A_1 A_2}  (U\otimes I) W_{A_1 A_2}  (U^\dag \otimes I) =K_1K_2
\]
has the same eigenvalues as the tensor product of $K_1$ and $K_2$. In other  words,
\[
\mathsf{eig}(K_1K_2) = \{z_1z_2\, : \, z_1\in \mathsf{eig}(K_1) \quad \mbox{and} \quad z_2\in \mathsf{eig}(K_2)\}.
\]
By definition, $e^{i\theta(A_p)}\in  \mathsf{eig}(K_p)$ for $p=1,2$.
Thus  $e^{i\theta(A_1)+ i\theta(A_2)}\in \mathsf{eig}(K_1K_2)= \mathsf{eig}(K_{12})$.

Consider case (a). Let  $e^{i\varphi_p}\in  \mathsf{eig}(K_p)$ be eigenvalues such that 
$e^{i\theta(A_1A_2)} = e^{i(\varphi_1+\varphi_2)}$.
Then 
\be
\label{case_a}
\theta(A_1 A_2)= \varphi_1 + \varphi_2 + 2\pi k
\ee
for some integer $k$ chosen such that $\theta(\sigma_1 \sigma_2)\in [0,\pi]$.
By definition, $|\varphi_p|\le \theta(A_p)$ and thus 
\[
|\varphi_1|+|\varphi_2|\le \theta(A_1)+\theta(A_2)<\frac{\pi}2.
\]
Hence  the only integer  $k$ in  Eq.~(\ref{case_a}) satisfying 
$\theta(A_1 A_2)\in [0,\pi]$ is $k=0$, that is,
$\theta(A_1A_2) = \varphi_1+\varphi_2\le \theta(A_1)+\theta(A_2)$.
Conversely,  since $e^{i\theta(A_1)+ i\theta(A_2)}$ is an eigenvalue of $K_{12}$
and $\theta(A_1)+\theta(A_2)<\pi/2$, one infers that 
$\theta(A_1A_2)\ge \theta(A_1)+\theta(A_2)$.
This proves Eq.~(\ref{theta_additive}).

Consider case (b). The same arguments as above show that $K_{12}$ has an eigenvalue
$e^{i\varphi}$, where $\varphi = \theta(A_1)+\theta(A_2)\in [\pi/2,\pi)$.
Here we used the assumption that both $A_1$ and $A_2$ are good,
as well as the bound $\theta(A_1)+\theta(A_2)\ge \pi/2$.
Hence $\theta(A_1A_2)\ge \pi/2$ and $\delta(U)\ge \sqrt{2}$
by
Lemma~\ref{lemma:badness}.
\end{proof}
By inductive application of the additivity lemma one obtains the following.
\begin{corol}
\label{corol:1}
Suppose $A_1,\ldots,A_\ell \subseteq [n]$ are lightcone separated subsets.
Let $A=\cup_{p=1}^\ell A_p$ be their union and
\be
\label{sum_over_cubes}
\varphi =\sum_{p=1}^{\ell} \theta(A_p).
\ee
Here the angles are added as real numbers (rather than modulo $2\pi$).
If $\varphi<\pi/2$ then
\be
\| W_A (U\otimes I) W_A (U^\dag \otimes I)  - I\| = |e^{i\varphi}-1|.
\ee
If  $\varphi \ge \pi/2$ then $\delta(U)\ge \sqrt{2}$.
\end{corol}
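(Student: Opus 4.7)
The plan is induction on $\ell$ with a strengthened inductive hypothesis: in the regime $\varphi<\pi/2$ I will prove the identity $\theta(A)=\varphi$, from which the norm formula follows instantly by Eq.~(\ref{theta_vs_distance}); in the regime $\varphi\ge\pi/2$ I will prove $\delta(U)\ge\sqrt{2}$. The strengthening is necessary because the additivity lemma requires both of its arguments to be \emph{good}, so the induction must propagate the structural identity "$\theta$ on the accumulated prefix equals the running sum" rather than just a norm bound on that prefix.

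The base case $\ell=1$ is immediate: $\theta(A)=\theta(A_1)=\varphi$ trivially, and if $\varphi\ge\pi/2$ then $A_1$ is bad and Lemma~\ref{lemma:badness} supplies $\delta(U)\ge\sqrt{2}$. For the inductive step I would set $B=A_1\cdots A_{\ell-1}$ and $\varphi'=\sum_{p<\ell}\theta(A_p)$, so $\varphi=\varphi'+\theta(A_\ell)$. First, dispose of the easy branch where some individual $A_p$ is bad: then $\varphi\ge\theta(A_p)\ge\pi/2$ and Lemma~\ref{lemma:badness} directly yields $\delta(U)\ge\sqrt{2}$, matching the desired conclusion. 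So I may assume every $A_p$ is good. Pairwise lightcone separation of the $A_p$'s gives $\calL(B)=\bigcup_{p<\ell}\calL(A_p)$, which is disjoint from $\calL(A_\ell)$, so $B$ and $A_\ell$ are themselves lightcone separated. If $\varphi'\ge\pi/2$, induction already hands me $\delta(U)\ge\sqrt{2}$ and $\varphi\ge\varphi'\ge\pi/2$, so the conclusion holds; otherwise $\varphi'<\pi/2$, induction gives $\theta(B)=\varphi'$ (so $B$ is good), and the additivity lemma applied to the lightcone-separated good pair $B,A_\ell$ falls into case (a) when $\varphi<\pi/2$, yielding $\theta(A)=\theta(B)+\theta(A_\ell)=\varphi$, and into case (b) when $\varphi\ge\pi/2$, yielding $\delta(U)\ge\sqrt{2}$.

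The main subtlety is the one already flagged: the purely quantitative statement of the corollary must be upgraded to the qualitative identity $\theta(A)=\varphi$ so that the goodness of the intermediate prefix $B$ can be established at each step, since this is exactly what the additivity lemma needs. Once that observation is in place no genuine calculation appears; each branch is a direct invocation of either Lemma~\ref{lemma:badness} or the additivity lemma, with the case split on whether $\varphi'\ge\pi/2$ or $\varphi'<\pi/2$ mirroring the case split in the statement of the corollary itself.
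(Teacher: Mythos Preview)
Your argument is correct and is precisely the inductive application of the additivity lemma that the paper invokes (the paper states only ``By inductive application of the additivity lemma one obtains the following'' without spelling out details). In particular, your strengthening of the inductive hypothesis to $\theta(A)=\varphi$ is exactly what is needed to guarantee that the prefix $B$ is good at each step, and your case analysis matches the two cases of the additivity lemma.
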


\section{Identity check algorithm }
\label{sec:algo1}

Combining all above ingredients we arrive at the following algorithm
for the $D$-dimensional identity check problem.
We first consider the case when the input circuit $U$ is sufficiently close to the identity such that $\delta(U)<2$.
Below we assume that a reclusive partition $[n]=A_1\ldots A_{D+1}$ of the $D$-dimensional qubit array 
has been already computed, see Appendix~\ref{app:partition} for details. 
We claim that the following algorithm outputs an estimator $\gamma$
satisfying $\delta(U)\le \gamma \le (D+1)\delta(U)$.
\begin{algorithm}[H]
	\caption{Identity check (diamond-norm)\label{algo1}}
\textbf{Input:}  An $n$-qubit $D$-dimensional circuit $U$ with $\delta(U)<2$.\\
 \textbf{Output:} $\gamma \in \RR$ satisfying  $\delta(U)\le \gamma \le (D+1)\delta(U)$.
	\begin{algorithmic}[1]
\State{$\gamma\gets 0$}
\For{$j=1$ to $D+1$}
\State{$\varphi_j\gets 0$}
\State{$\ell_j\gets$ number of cubes in $A_j$}
\For{$p=1$ to $\ell_j$}
\State{$A_{j,p}\gets$ $p$-th cube in $A_j$}
\State{$\varphi_j\gets \varphi_j + \theta(A_{j,p})$}
\If{$\varphi_j\ge \pi/2$}
\State{\Return $\gamma=2$}
\EndIf
\EndFor
\State{$\gamma\gets \gamma + |e^{i\varphi_j}-1|$}
\EndFor	
\end{algorithmic}
\end{algorithm}
Indeed, if line~9 is never reached, 
Corollary~\ref{corol:1} of the additivity lemma
imply that the output of the algorithm coincides with the quantity $\gamma$
defined in Lemma~\ref{lemma:comm1}
specialized to the reclusive partition.
In this case correctness of the algorithm follows directly
from Lemma~\ref{lemma:comm1}.
Otherwise, the algorithm outputs
$\gamma=2$, while Corollary~\ref{corol:1}
implies that $\delta(U)\ge \sqrt{2}$. In this case $\gamma=2$
satisfies the bounds $\delta(U)\le \gamma \le (D+1)\delta(U)$ for $D\ge 1$.
We claim that the algorithm runs in time  $O(n 2^{12(2hD)^D})$.
Indeed, the total number of 
cubes $A_{j,p}$ is $O(n)$. Computing the function $\theta(A_{j,p})$ at line~7
requires eigenvalues of a unitary operator 
$K_{A_{j,p}}$ acting on at most $4(2hD)^D$ qubits,
as discussed in Section~\ref{sec:lightcones}.
This computation
takes time $O(2^{12 (2hD)^D})$. Hence the total runtime is $O(n 2^{12(2hD)^D})$.

Next consider the general case when it is possible that $\delta(U)=2$. 
Define our estimator of $\delta(U)$  as $1.16\gamma$, where $\gamma$ is the output of Algorithm~1.
We claim that 
\be
\label{general_case}
\delta(U)\le 1.16 \gamma \le 1.16 (D+1)\delta(U).
\ee
If the algorithm never reaches line~9 then 
its output coincides with the quantity $\gamma$
defined in Lemma~\ref{lemma:comm1}
and Eq.~(\ref{general_case})
follows directly from Lemma~\ref{lemma:comm1},
see Eq.~(\ref{gamma_eq2}).
Otherwise, if the algorithm reaches line~9, it
outputs $\gamma=2$ while $\delta(U)\ge \sqrt{2}$
due to Corollary~\ref{corol:1} of the additivity lemma.
In this case the first inequality in
Eq.~(\ref{general_case}) 
follows from $\delta(U)\le 2$
and the second inequality
becomes
$2\le (D+1)\delta(U)$ which is true for any $D\ge 1$
since $\delta(U)\ge \sqrt{2}$.
The runtime analysis is the same as before.

Since the runtime scales exponentially with the size of cubes $A_{j,p}$, one may wish
to choose a partition with smaller cubes even if this negatively impacts the  approximation quality.
As an extreme case, one can choose each cube $A_{j,p}$ as a single qubit. However ensuring the lightcone separation between
cubes in the same subset $A_j$ would require  $\approx (4h+1)^D$ subsets $A_j$ instead of $D+1$ subsets~\footnote{Since 
any qubit is lightcone separated from all but at most $(1+4h)^D$ other qubits, Vizing's theorem implies that  qubits can be partitioned into at most $1+(1+4h)^D$
lightcone separated subsets.}.
Accordingly, the approximation ratio  would become $\alpha=\Omega((4h+1)^D)$ instead of $\alpha=D+1$.

Likewise,
we  expect that the runtime can be improved at the cost of a worse approximation ratio $\alpha$
by computing the norm of commutators $K_{A_{j,p}}-I$ using a randomized version of the power method~\cite{kuczynski1992estimating}.
It is known that this method can approximate the operator norm 
of a matrix of size $2^m \times 2^m$ with a multiplicative error $1+\epsilon$ using
$O(m/\epsilon)$ matrix-vector multiplications~\cite{kuczynski1992estimating}.
In our case, $K_{A_{j,p}}$ is specified by a quantum circuit acting on $m=4(2hD)^D$ qubits
with $poly(m)$ gates,
see Section~\ref{sec:lightcones}. Thus one can implement matrix-vector multiplication for 
the matrix $K_{A_{j,p}}-I$ in time $poly(m) 2^m$.
Accordingly, the power method 
runs in time $poly(m) 2^m/\epsilon$, whereas the exact diagonalization 
of $K_{A_{j,p}}-I$ requires
time $\Omega(2^{3m})$.

\section{Numerical experiments}
\label{sec:numerics}

In this section, we implement the algorithm described in Section~\ref{sec:algo1}  to approximate the distance between identity and a constant-depth circuit $U$ of up to 100 qubits.
We consider $U = U_1 U_2^\dag$, where $U_1, U_2$ are two different unitaries that both approximate the time evolution $e^{-iH\tau}$ of $n$ qubits under the one-dimensional XY model:
\begin{align*} 
      H = \sum_{j = 1}^{n-1} \left(X_j X_{j+1} + Y_j Y_{j+1}\right).
\end{align*}
In the limit of small $\tau$, $U = U_1 U_2^\dag \approx I$ approximate a forward evolution followed by a backward evolution under the same Hamiltonian.
Explicitly, $U_1$ and $U_2$ are the first-order Trotter approximations with, respectively, an odd-even ordering and an X-Y ordering:
\begin{align*} 
   U_1 &= e^{-i\tau\sum_{\text{odd }j}\left(X_j X_{j+1} + Y_j Y_{j+1}\right) }e^{-i\tau\sum_{\text{even }j}\left(X_j X_{j+1} + Y_j Y_{j+1}\right)},\\
   U_2 &= e^{-i\tau\sum_{j}X_j X_{j+1}}e^{-i\tau\sum_{j} Y_j Y_{j+1} }.
\end{align*}
We note that $X_jX_{j+1}$ and $Y_jY_{j+1}$ are both antisymmetric under the unitary conjugation by the staggered Pauli string $X_1 Y_2 X_3 Y_4 \dots$.
Therefore, the eigenvalues of $U$ comes in complex conjugate pairs which results in a simple relationship between the diamond-norm and the operator-norm distances. Namely, a simple algebra shows that $\delta(U)=2\sin{(\varphi)}$, where
$\varphi\in [0,\pi/2)$ is defined by 
$\|U-I\|=|e^{i\varphi}-1|$.
In addition, using a well-known mapping from the XY model to free fermions~\cite{terhal2002classical}, we can compute this distance exactly, providing a benchmark for our algorithm.

\begin{figure}[b]
\centering
\includegraphics[width=0.45\textwidth]{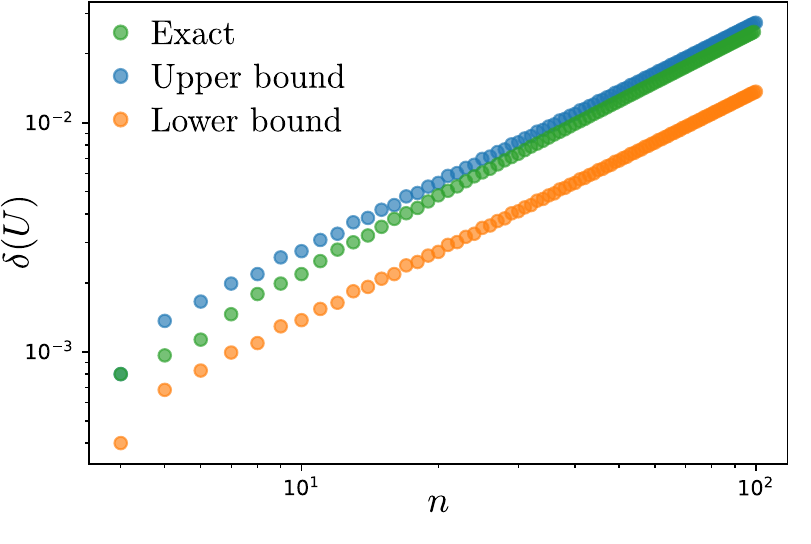}
\caption{A comparison between the exact diamond-norm distance 
$\delta(U)$ (green dots)
computed by a mapping to free fermions,
an upper bound $\gamma$
computed by Algorithm~1  (blue dots) and the  lower bound $\gamma/2$
(orange dots).
Both bounds closely capture the exact distance between $U$ and $I$, demonstrating the scalability of our algorithm.}
\label{fig:numerics}
\end{figure}

In Fig.~\ref{fig:numerics}, we compare the exact distance $\delta(U)$ against the bounds presented in Lemma~\ref{lemma:comm1} for up to 100 qubits at $\tau = 0.01$.
For the one-dimensional qubit array, the bounds simplify to $\delta(U) \leq \gamma \leq 2\delta(U)$, where
\begin{align} 
  \gamma =  \sum_{j=1}^2 \|  W_{A_j} (U \otimes I) W_{A_j}  (U^\dag \otimes I) - I\|.
\end{align}
Here, $A_1$ and $A_2$ are the qubit partitions illustrated in Fig.~\ref{fig:partition} with $L = 4$.
The lightcone separated construction of $A_j$ and the additivity lemma allow us to efficiently compute the commutator $\|  W_{A_j} (U \otimes I) W_{A_j}  (U^\dag \otimes I) - I\|$ for each $j$.
In particular, computing the bounds reduces to finding eigenvalues of operators that are each supported on at most 12 qubits.
Additionally, due to the translational invariance of the unitary $U$, only $O(1)$ such operators are unique, making the complexity of our algorithm independent of the system size.

Both bounds correctly capture the linear dependence of the Trotter error on the system size $n$, with
the upper bound $\gamma$ approaching the exact $\delta(U)$ in the limit of large $n$.
We note that $\norm{U-I}$ and, thus, $\delta(U)$ can also be estimated by finding the maximum eigenvalue of the Hamiltionian $H_U \equiv (U-I)^\dag (U-I)$.
Writing this Hamiltonian as a matrix product operator on a one-dimensional lattice, one can efficiently find a lower bound to its maximum eigenvalue using an algorithm based on the density matrix renormalization group (DMRG).
While DMRG does not have a performance guarantee, we find that it produces lower bounds to within $3\times 10^{-7}$ of the exact $\delta(U)$ in this example, providing a complementary approach to our algorithm in one dimension.
DRMG simulations were performed using  the matrix product representation library for Python
$\mathsf{mpnum}$
~\cite{mpnum} with 
MPS bond dimension $\chi=20$ and two DMRG sweeps in $\mathsf{mpnum.linalg.eig}$.

\section*{Acknowledgements}
SB thanks Steven Flammia and Kristan Temme for helpful discussions.
MCT thanks Kunal Sharma for helpful discussions.
This work was partially completed while NP was interning at IBM Quantum. NP is supported by AFOSR award FA9550-21-1-0040, NSF CAREER award CCF-2144219, and the Sloan Foundation.

\appendix

\section{Proof of Lemma~\ref{lemma:partition}}
\label{app:partition}

Let $A$ be an upper triangular $D\times D$ matrix with the unit diagonal.
In other words,
$A_{i,i}=1$ for all $i$ and $A_{i,j}=0$ for all $i>j$.
Define  a lattice $\calL_A\subseteq \RR^D$ formed by linear combinations
of columns of $A$ with integer coefficients. 
By definition, $p\in \calL_A$ iff $p=Ac$ for some integer vector $c\in \ZZ^D$.
For each lattice point $p\in \calL_A$ define an open cube $C(p)$ and a closed cube
$\overline{C}(p)$
such that $p$ is the cube's corner with the smallest coordinates, that is,
\[
C(p)=p+(0,1)^D \quad \mbox{and} \quad \overline{C}(p) = p + [0,1]^D.
\]
The following claim can be interpreted as saying that the cubes $C(p)$ form a partition of the Euclidean space $\RR^D$
if one ignores cube's boundaries.
\begin{claim}
Any point $x\in \RR^D$ is contained in at most one open cube $C(p)$.
Any point $x\in \RR^D$ is contained in at least one closed cube $\overline{C}(p)$.
\end{claim}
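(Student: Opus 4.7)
The plan is to exploit the upper triangular unit-diagonal structure of $A$ coordinate by coordinate, starting from the last coordinate and working upward. The key observation is that for $c \in \ZZ^D$ one has $(Ac)_i = c_i + \sum_{j>i} A_{i,j} c_j$, so the $i$-th coordinate depends only on $c_i, c_{i+1}, \ldots, c_D$ and has coefficient $1$ on $c_i$. In particular, $(Ac)_D = c_D$.

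For uniqueness, suppose $x \in C(p_1) \cap C(p_2)$ with $p_1, p_2 \in \calL_A$. I would write $p_2 - p_1 = Ac$ for some $c \in \ZZ^D$, so that membership in the intersection forces $(Ac)_i \in (-1,1)$ for every $i$. A backward induction on $i$ from $D$ down to $1$ then closes the argument: the base case $(Ac)_D = c_D \in (-1,1) \cap \ZZ$ gives $c_D = 0$, and once $c_{i+1} = \cdots = c_D = 0$ the sum collapses and $(Ac)_i = c_i \in (-1,1) \cap \ZZ = \{0\}$. Hence $c = 0$ and $p_1 = p_2$.

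For existence, I would construct a suitable $c \in \ZZ^D$ explicitly by a recursion on $i$ running from $D$ down to $1$. Having fixed $c_{i+1}, \ldots, c_D$, set
\[
c_i = \left\lfloor x_i - \sum_{j>i} A_{i,j} c_j \right\rfloor.
\]
The defining property of the floor then guarantees $x_i - (Ac)_i \in [0,1)$, and setting $p = Ac \in \calL_A$ gives $x \in \overline{C}(p)$.

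I do not expect a serious obstacle: both parts reduce to the fact that the only integer in $(-1,1)$ is zero, together with the triangular structure. The one subtle point, worth highlighting in the write-up, is that uniqueness genuinely requires the cubes to be open --- the strict inequality $|(Ac)_i| < 1$ is what forces each integer $c_i$ to vanish --- while the floor construction for existence naturally lands in $[0,1)^D \subseteq [0,1]^D$ and cannot be upgraded to $(0,1)^D$. This asymmetry is exactly why the statement pairs open cubes with ``at most one'' and closed cubes with ``at least one.''
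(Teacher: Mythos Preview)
Your proposal is correct and, for the uniqueness part, essentially identical to the paper's proof: both write the difference of the two lattice points as $Ac$, observe that $(Ac)_i = c_i + \sum_{j>i} A_{i,j} c_j$, and run a backward induction from $i=D$ using that the only integer in $(-1,1)$ is zero.

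For existence the two arguments differ in presentation but not in substance. The paper argues by contradiction, shifting $x$ by integer multiples of the columns of $A$ one coordinate at a time (again from $i=D$ downward) until the shifted vector lies in the unit cube at the origin. Your proof is the direct constructive version of the same idea: the recursion $c_i = \lfloor x_i - \sum_{j>i} A_{i,j} c_j \rfloor$ is exactly the integer shift the paper is implicitly choosing at each step. Your formulation has the minor advantage of yielding the slightly sharper conclusion $x \in p + [0,1)^D$ and of making the choice of lattice point explicit, whereas the paper's phrasing ``make $|x_i|\le 1$'' is a touch loose (one really wants $x_i \in [0,1]$ to land in $\overline{C}(0^D)$). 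Both arguments rest on the same structural fact: the upper-triangular unit-diagonal form lets you fix the coordinates one at a time from the bottom up without disturbing what has already been fixed.
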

\begin{proof}
Define $\ell_\infty$ norm of a vector $x\in \RR^D$ as 
\[
\|x\|_\infty = \max_{i=1,\ldots,D} |x_i|.
\]
Suppose $x\in \RR^D$ is contained in cubes $C(p)$ and $C(q)$
for some lattice points $p,q\in \calL$. We have to show that $p=q$.
Clearly, cubes $C(p)$ and $C(q)$ overlap iff
\be
\label{pq_overlap}
\| p - q\|_\infty <1.
\ee
Thus we need to show that Eq.~(\ref{pq_overlap}) implies $p=q$.
Write
\be
\label{pq_overlap1}
r=p-q=Ac
\ee
for some $c\in \ZZ^D$. Using the upper triangular structure of $A$ and the fact that $A$
has unit diagonal one gets
\be
\label{pq_overlap2}
r_i = c_i + \sum_{j=i+1}^D A_{i,j} c_j.
\ee
If $i=D$ then clearly $r_i=c_i$ and thus $|r_i|<1$ is only possible if $c_i=0$.
If $i=D-1$ then $r_i=c_i + A_{i,D} c_D$. However, we have already showed that $c_D=0$.
Thus $r_i = c_i$ and $|r_i|<1$ is only possible if $c_i=0$.
Applying the same argument inductively proves that $c$ is the all-zeros vector, that is, Eq.~(\ref{pq_overlap}) implies $p=q$.

Suppose some vector $x\in \RR^D$ is not contained in any closed cube $\overline{C}(p)$.
Then $\|x-p\|_\infty>1$ for all lattice points $p\in \calL$.
Let us show that this assumption leads to a contradiction. Indeed,
set $i=D$.
Shift $x$ by an integer linear combination of the $i$-th column of $A$
to make $|x_i|\le 1$.
This is possible since $A_{i,i}=1$.
Next set $i=D-1$. Shift $x$ by an integer linear combination of the $i$-th column of $A$
to make $|x_i|\le 1$  and $|x_{i+1}|\le 1$.
This is possible since $A_{i,i}=1$ and $A_{i+1,i}=0$.
Applying the same argument inductively shows that shifting $x$ by lattice points
one can make $\|x\|_\infty\le 1$. Hence $x$ is contained in the cube $\overline{C}(0^D)$.
Equivalently, the original vector $x$ is contained in some
cube $\overline{C}(p)$.
\end{proof}

Following Ref.~\cite{woude2022geometry} we choose
\be
A_{i,j} = \left\{\ba{rcl}
1 &\mbox{if} & i=j,\\
\frac{D-j+1}{D} &\mbox{if} & i<j,\\
0 && \mbox{else}\\
\ea
\right.
\ee
for $1\le i,j\le D$.
For example,
\[
A=\left[\ba{cc}
1 & 1/2  \\
0 & 1 \\
\ea
\right]
\quad \mbox{and} \quad
A=\left[\ba{ccc}
1 & 2/3 & 1/3 \\
0 & 1 & 1/3 \\
0 & 0 & 1\\
\ea
\right]
\]
in the case $D=2$ and $D=3$ respectively. 
Below we summarize properties of the corresponding lattice $\calL_A$ established
in~\cite{woude2022geometry}. 
\begin{fact}[\bf Lemmas~7.15 and 7.19 of~\cite{woude2022geometry}]
The $\ell_\infty$-distance between closed cubes $\overline{C}(p)$ and $\overline{C}(q)$ is either $0$ 
(if these cubes overlap)
or at least $1/D$ (if these cubes do not overlap).
Here $p,q\in \calL_A$ are arbitrary lattice points.
\end{fact}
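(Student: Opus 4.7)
The plan is to avoid any induction and reduce the statement to a single rationality observation about the lattice $\calL_A$. I would first translate the geometric claim into coordinates: for axis-aligned closed unit cubes with base corners $p$ and $q$, the $\ell_\infty$-distance is
\[
d_\infty\bigl(\overline{C}(p),\overline{C}(q)\bigr) \;=\; \max_i \max\bigl(0,\,|r_i|-1\bigr), \qquad r := p-q,
\]
obtained by minimizing $|x_i-y_i|$ independently in each coordinate and then taking the max over $i$. Thus the cubes overlap exactly when $\|r\|_\infty\le 1$, and otherwise their distance equals $\|r\|_\infty-1$. The task therefore reduces to showing that whenever $r\in\calL_A$ satisfies $\|r\|_\infty>1$, one in fact has $\|r\|_\infty\ge 1+1/D$.

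The key observation is that every coordinate of a vector in $\calL_A$ lies in $\tfrac{1}{D}\ZZ$. Writing $r=Ac$ with $c\in\ZZ^D$,
\[
r_i \;=\; c_i+\sum_{j=i+1}^{D}\frac{D-j+1}{D}\,c_j \;=\; \frac{1}{D}\!\left(D c_i+\sum_{j=i+1}^{D}(D-j+1)\,c_j\right),
\]
and the parenthesized expression is manifestly an integer. This uses only the specific form of $A$, every off-diagonal entry having denominator $D$ and integer numerator, which is precisely why this particular matrix is used in~\cite{woude2022geometry}.

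Given this rationality, the conclusion is immediate: if $|r_i|>1$ for some $i$, then since $r_i\in\tfrac{1}{D}\ZZ$ the value $|r_i|$ lies in $\{0,\tfrac{1}{D},\tfrac{2}{D},\ldots\}$, so the smallest value strictly exceeding $1$ is $(D+1)/D = 1+1/D$. Therefore $|r_i|-1\ge 1/D$, and $d_\infty\bigl(\overline{C}(p),\overline{C}(q)\bigr)\ge 1/D$ whenever it is nonzero. I do not anticipate a real obstacle: no induction on $D$ or on coordinates is required, and the argument is uniform in the dimension. The only step worth double-checking is the standard geometric formula for the $\ell_\infty$-distance between axis-aligned closed boxes, which follows by minimizing per coordinate and then maximizing.
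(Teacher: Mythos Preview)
Your argument is correct. The formula
\[
d_\infty\bigl(\overline{C}(p),\overline{C}(q)\bigr)=\max_i\max\bigl(0,|r_i|-1\bigr)
\]
for axis-aligned closed unit cubes is standard and your derivation of it is fine; the observation that $r=Ac$ has every coordinate in $\tfrac{1}{D}\ZZ$ follows immediately from the explicit form of $A$, and the gap conclusion then drops out.

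Note, however, that the paper does \emph{not} prove this Fact at all: it is quoted without argument as Lemmas~7.15 and~7.19 of Vander Woude et al.\ \cite{woude2022geometry}, and the appendix only uses it as a black box to assemble Lemma~\ref{lemma:partition}. So there is no paper proof to compare against. Your self-contained argument is a genuine addition; it is short, uniform in $D$, and avoids whatever machinery the original reference develops for the more general reclusive-partition framework. The only point where one could ask for a line more of justification is the exchange of $\inf$ and $\max$ in the distance formula, but for product sets in the $\ell_\infty$ metric this is immediate since the coordinates decouple.
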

\begin{fact}[\bf Theorem~7.16 of~\cite{woude2022geometry}]
The cubes $\{\overline{C}(p)\}_{p\in \calL_A}$ can be colored with $D+1$ colors such that any cube $\overline{C}(p)$
overlaps only with 
cubes $\overline{C}(q)$ of a different color.
\end{fact}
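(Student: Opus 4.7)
The plan is to construct an explicit translation-invariant $(D+1)$-coloring. Because the overlap relation on $\calL_A$ depends only on the difference $p-q\in\calL_A$, a valid coloring is equivalent to a group homomorphism $\phi:\calL_A\to\ZZ/(D+1)\ZZ$ whose kernel contains no nonzero overlap direction. Parametrizing $\calL_A\cong\ZZ^D$ via $p=Ac$, let $\calO=\{c\in\ZZ^D\setminus\{0\}:\|Ac\|_\infty\le 1\}$ denote the set of nonzero overlap directions; the coloring will be $\overline{C}(Ac)\mapsto \phi(c)$.

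I propose $\phi(c)=D(Ac)_1\pmod{D+1}$, noting that $D(Ac)_1=\sum_{j=1}^{D}(D-j+1)c_j$ is always an integer. For any $c\in\calO$ the overlap constraint $|(Ac)_1|\le 1$ yields $|D(Ac)_1|\le D<D+1$, so $\phi(c)\equiv 0\pmod{D+1}$ forces $D(Ac)_1=0$. Hence the entire claim reduces to the key lemma: no $c\in\calO$ satisfies $(Ac)_1=0$.

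To prove the key lemma, introduce partial sums $S_k=\sum_{i=1}^{k}(D-i+1)c_i$, so that $S_D=D(Ac)_1$. A direct expansion gives $D(Ac)_k=Dc_k+(S_D-S_k)$, which under the assumption $(Ac)_1=0$ (i.e., $S_D=0$) becomes $D(Ac)_k=Dc_k-S_k=(k-1)c_k-S_{k-1}$. The overlap constraint $|(Ac)_k|\le 1$ then reads $|(k-1)c_k-S_{k-1}|\le D$ for every $k$. Suppose for contradiction that a nonzero $c\in\calO$ has $(Ac)_1=0$; let $k_0$ be the smallest index with $c_{k_0}\neq 0$ and, by the symmetry $c\leftrightarrow -c$, assume $c_{k_0}\ge 1$. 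I claim $c_j\ge 0$ for every $j\ge k_0$, by induction on $j$: the base case $j=k_0$ is trivial, and if $c_j\le -1$ for some $j>k_0$ then $S_{j-1}\ge (D-k_0+1)c_{k_0}\ge D-k_0+1$ (the subsequent terms being non-negative by the inductive hypothesis), so $|(j-1)c_j-S_{j-1}|=S_{j-1}+(j-1)|c_j|\ge (D-k_0+1)+k_0=D+1$, contradicting the overlap bound of $D$ (since $j-1\ge k_0$). Therefore $c_j\ge 0$ for all $j\ge k_0$, giving $S_D\ge D-k_0+1>0$ and contradicting $S_D=0$.

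The main obstacle is the inductive sign-propagation in the last paragraph; while the reduction to a homomorphism and the choice of $\phi$ are natural (any integer linear functional of $c$ modulo $D+1$ is a candidate), the delicate step is deploying the precise superdiagonal values $(D-j+1)/D$ of $A$ so that the uniform overlap bound $D$ plus the strict inequality $D<D+1$ just barely closes the argument.
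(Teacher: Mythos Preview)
The paper does not give its own proof of this statement; it is quoted verbatim as Fact~2 with a citation to Theorem~7.16 of~\cite{woude2022geometry}. So there is nothing in the paper to compare your argument against, only the bare citation.

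That said, your proof is correct and self-contained. The reduction to a homomorphism $\phi:\ZZ^D\to\ZZ/(D+1)\ZZ$ is the right framework, and the specific choice $\phi(c)=D(Ac)_1=\sum_{j=1}^{D}(D-j+1)c_j\pmod{D+1}$ works cleanly: the overlap bound $|(Ac)_1|\le 1$ immediately forces $D(Ac)_1\in\{-D,\dots,D\}$, so $\phi(c)\equiv 0$ implies $(Ac)_1=0$. Your key lemma (no nonzero overlap direction has $(Ac)_1=0$) is then proved by the sign-propagation induction, and I checked the arithmetic: with $S_D=0$ one has $D(Ac)_k=(k-1)c_k-S_{k-1}$, and if the first nonzero entry $c_{k_0}\ge 1$ and all intermediate entries are non-negative, then a negative $c_j$ at step $j>k_0$ would give $|(j-1)c_j-S_{j-1}|\ge (j-1)+(D-k_0+1)\ge k_0+(D-k_0+1)=D+1$, contradicting the bound $D$. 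The final contradiction $S_D\ge D-k_0+1>0$ is immediate. The boundary case (closed cubes touching along a face, $\|Ac\|_\infty=1$) is also covered by your non-strict inequality in the definition of $\calO$.

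In short: your argument is a valid and rather elegant replacement for the black-box citation the paper uses; it actually makes the appendix self-contained, which the paper's version is not.
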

As a consequence of Facts~1 and 2, the  $\ell_\infty$-distance between any pair of cubes  $\overline{C}(p)$
of the same color is at least $1/D$.
Rescaling each cube by the factor $L=2Dh$ and noting that $LA$ is an integer matrix
one obtains a partition of $\RR^D$ into a disjoin union of $D$-dimensional cubes $L\overline{C}(p)$
of linear size $L$ such that corners of each cube have integer  coordinates,
the cubes are colored with $D+1$ colors, and the  $\ell_\infty$-distance between any pair of cubes  
of the same color is at least $L/D$. 

Finally, embed a $D$-dimensional rectangular array into $\RR^D$
such that each cell of the array is a translation of the cube $(0,1)^D$ by an integer vector.
We can now define the desired set of cells $A_j$ as the union of all cells contained in the rescaled
cubes $L\overline{C}(p)$ of the $j$-th color.
This concludes the proof of Lemma~\ref{lemma:partition}.


\begin{thebibliography}{10}

\bibitem{rosgen2005hardness}
Bill Rosgen and John Watrous.
\newblock On the hardness of distinguishing mixed-state quantum computations.
\newblock In {\em 20th Annual IEEE Conference on Computational Complexity
  (CCC'05)}, pages 344--354. IEEE, 2005.

\bibitem{rosgen2007distinguishing}
Bill Rosgen.
\newblock Distinguishing short quantum computations.
\newblock {\em arXiv preprint arXiv:0712.2595}, 2007.

\bibitem{janzing2005non}
Dominik Janzing, Pawel Wocjan, and Thomas Beth.
\newblock "{N}on-identity-check" is {QMA}-complete.
\newblock {\em International Journal of Quantum Information}, 3(03):463--473,
  2005.

\bibitem{ji2009non}
Zhengfeng Ji and Xiaodi Wu.
\newblock Non-identity check remains {QMA}-complete for short circuits.
\newblock {\em arXiv preprint arXiv:0906.5416}, 2009.

\bibitem{vidal2004efficient}
Guifr{\'e} Vidal.
\newblock Efficient simulation of one-dimensional quantum many-body systems.
\newblock {\em Physical review letters}, 93(4):040502, 2004.

\bibitem{aharonov1998quantum}
Dorit Aharonov, Alexei Kitaev, and Noam Nisan.
\newblock Quantum circuits with mixed states.
\newblock In {\em Proceedings of the thirtieth annual ACM symposium on Theory
  of computing}, pages 20--30, 1998.

\bibitem{ben2009complexity}
Avraham Ben-Aroya and Amnon Ta-Shma.
\newblock On the complexity of approximating the diamond norm.
\newblock {\em arXiv preprint arXiv:0902.3397}, 2009.

\bibitem{vazirani2001approximation}
Vijay~V Vazirani.
\newblock {\em Approximation algorithms}, volume~1.
\newblock Springer, 2001.

\bibitem{nielsen2010quantum}
Michael~A Nielsen and Isaac~L Chuang.
\newblock {\em Quantum computation and quantum information}.
\newblock Cambridge university press, 2010.

\bibitem{huggins2020non}
William~J Huggins, Joonho Lee, Unpil Baek, Bryan O’Gorman, and K~Birgitta
  Whaley.
\newblock A non-orthogonal variational quantum eigensolver.
\newblock {\em New Journal of Physics}, 22(7):073009, 2020.

\bibitem{seki2021quantum}
Kazuhiro Seki and Seiji Yunoki.
\newblock Quantum power method by a superposition of time-evolved states.
\newblock {\em PRX Quantum}, 2(1):010333, 2021.

\bibitem{kirby2023exact}
William Kirby, Mario Motta, and Antonio Mezzacapo.
\newblock Exact and efficient lanczos method on a quantum computer.
\newblock {\em Quantum}, 7:1018, 2023.

\bibitem{schollwock2011density}
Ulrich Schollw{\"o}ck.
\newblock The density-matrix renormalization group in the age of matrix product
  states.
\newblock {\em Annals of physics}, 326(1):96--192, 2011.

\bibitem{learning2024}
Hsin-Yuan Huang, Yunchao Liu, Michael Broughton, Isaac Kim, Anurag Anshu, Zeph
  Landau, and Jarrod~R. McClean.
\newblock Learning shallow quantum circuits.
\newblock {\em arXiv preprint arXiv:2401.10095}, 2024.

\bibitem{woude2022geometry}
Jason~Vander Woude, Peter Dixon, A~Pavan, Jamie Radcliffe, and
  NV~Vinodchandran.
\newblock Geometry of rounding.
\newblock {\em arXiv preprint arXiv:2211.02694}, 2022.

\bibitem{watrous2009semidefinite}
John Watrous.
\newblock Semidefinite programs for completely bounded norms.
\newblock {\em arXiv preprint arXiv:0901.4709}, 2009.

\bibitem{Note1}
Since any qubit is lightcone separated from all but at most $(1+4h)^D$ other
  qubits, Vizing's theorem implies that qubits can be partitioned into at most
  $1+(1+4h)^D$ lightcone separated subsets.

\bibitem{kuczynski1992estimating}
Jacek Kuczy{\'n}ski and Henryk Wo{\'z}niakowski.
\newblock Estimating the largest eigenvalue by the power and {L}anczos
  algorithms with a random start.
\newblock {\em SIAM journal on matrix analysis and applications},
  13(4):1094--1122, 1992.

\bibitem{terhal2002classical}
Barbara~M Terhal and David~P DiVincenzo.
\newblock Classical simulation of noninteracting-fermion quantum circuits.
\newblock {\em Physical Review A}, 65(3):032325, 2002.

\bibitem{mpnum}
Daniel Suess and Milan Holz\"apfel.
\newblock mpnum: A matrix product representation library for {P}ython.
\newblock {\em Journal of Open Source Software}, 2(20):465, 2017.

\end{thebibliography}

\end{document}